\documentclass[11pt]{article}

\usepackage{amsmath,amssymb,mathtools,amsthm}
\usepackage{geometry}
\usepackage{etoolbox}
\usepackage{tikz}
\usepackage[colorlinks=true,citecolor=blue,linkcolor=blue,urlcolor=blue]{hyperref}
\usetikzlibrary{positioning}
\newcommand{\matzero}{\textcolor{black!35}{0}}
\newcommand{\hxedgeone}{\colorbox{cyan!22}{\(\textcolor{cyan!70!black}{\mathbf{1}}\)}}
\newcommand{\hzedgeone}{\colorbox{red!18}{\(\textcolor{red!70!black}{\mathbf{1}}\)}}

\newtheoremstyle{uprightthm}
  {6pt}
  {6pt}
  {\normalfont}
  {}
  {\bfseries}
  {.}
  {0.5em}
  {}
\theoremstyle{uprightthm}
\newtheorem{theorem}{Theorem}
\newtheorem{definition}{Definition}
\AtEndEnvironment{theorem}{\hfill$\square$}
\AtEndEnvironment{definition}{\hfill$\square$}

\title{A Factor-Graph Formulation of CSS Syndrome Decoding:\\
Joint BP and Four-State BP}
\author{Kenta Kasai\\
Institute of Science Tokyo\\
\texttt{kenta@ict.eng.isct.ac.jp}}
\date{}

\begin{document}
\maketitle

\begin{abstract}
For CSS syndrome decoding, the two check matrices impose binary parity-check
constraints on the two Pauli error components.  The posterior can therefore be
written as a binary factor graph with two Tanner graphs coupled by the local
joint prior at each qubit.  While a four-state Pauli-label representation is
natural for general stabilizer-code decoding, the binary CSS factor graph is the
natural representation for CSS codes.  We call the sum-product algorithm on this
factorization joint belief propagation (joint BP).  Joint BP retains the local
channel correlation between the two Pauli
components.  This note compares joint BP with the four-state Pauli-label factor
graph used for four-state BP.
The two algorithms are shown to have the same posterior weights, messages, and
beliefs after relabeling the four local Pauli states and marginalizing the
irrelevant binary component.
\end{abstract}

\section{Introduction}

Classical low-density parity-check (LDPC) codes were introduced as sparse
binary parity-check codes \cite{gallager62} and later formalized as graph codes
\cite{tanner81}.  Belief propagation (BP) for these codes is the
sum-product algorithm applied to a factorization of a probability distribution
\cite{ajiMcEliece00,kschischang01}.  This factor-graph formulation is the
standard language behind density evolution, irregular ensembles, and other
binary-LDPC design tools \cite{richardson01,chung01,luby01,ru2008}.

CSS codes are constructed from two binary linear codes with orthogonal check
matrices \cite{calderbank96,steane96}.  In syndrome
decoding, one binary check matrix constrains the \(Z\)-component of the Pauli
error and the other constrains the \(X\)-component.  Therefore the CSS
posterior has a binary factorization: two binary Tanner graphs are connected
only through the local channel prior \(Q_j(x_j,z_j)\) at each qubit.  If this
joint prior is replaced by the two marginal priors, the posterior factorization
separates into two independent binary decoders.  That decoder is the
separate belief-propagation decoder, abbreviated separate BP below, used or
discussed in several quantum LDPC decoding papers
\cite{babar15,rigbyOlivierJarvis19,roffe20,panteleevKalachev21,ostrev24,
huangUeng26}.  For a Pauli channel with correlated \(X\)- and \(Z\)-components,
the separated factorization does not contain the local correlation.

Sparse quantum-code decoding has also been formulated with one four-state Pauli
label per qubit, equivalently with labels in \(\mathbb F_4\) for stabilizer-code
BP \cite{mackay04,poulin08,rigbyOlivierJarvis19,kuo20}.  This representation
keeps the four local Pauli probabilities in a single variable-node prior.
Several works address the loss of \(X/Z\) correlation in separated CSS decoding
by adding or preserving correlation information, including CSS decoders using
\(X/Z\) correlations \cite{delfosseTillich14}, modified \(\mathbb F_2\)-based BP
decoders that reintroduce error correlations \cite{rigbyOlivierJarvis19},
soft-decision CSS decoding based on joint probabilities of two binary components
\cite{gotoUchikawa13,gotoUchikawa14}, and binary factor-graph sum-product
decoding with local joint priors \cite{komotoKasai25}.  Related work also shows
that refined BP can reproduce the output of conventional quaternary BP with
binary-BP-like check-node complexity \cite{kuo20}.

This note isolates a narrower statement.  It does not introduce another
decoder and does not claim that binary BP always means separate BP.  We call
the coupled binary decoder \emph{joint BP}: it is the sum-product algorithm on
the binary CSS factor graph that keeps the joint local prior
\(Q_j(x_j,z_j)\).  The note starts from two posterior factorizations for CSS
syndrome decoding: the joint-BP factorization with factors \(Q_j(x_j,z_j)\),
and the four-state Pauli-label factorization obtained by the relabeling
\(\alpha=x+\omega z\).  It then writes the sum-product updates for both graphs
and proves that the messages and beliefs correspond by relabeling and
marginalization.  Thus the ability to use local \(X/Z\) channel correlation is
a property of the posterior factorization, not of whether the sum-product
recursion is represented with binary messages or four-state Pauli labels.

The scope is CSS syndrome decoding for qubits.  The \(\mathbb F_4\) notation in
this note is a label set for the four Pauli error states.  It should be
distinguished from code-construction theories based on additive or linear
codes over \(\mathbb F_4\) \cite{calderbankRainsShorSloane98}, from
generalizations to \(p^m\)-state systems \cite{matsumotoUyematsu00}, and from
non-binary component-code constructions over larger finite fields
\cite{kasaiHagiwaraImaiSakaniwa12}.
For non-CSS stabilizer codes, a four-state Pauli-label decoder is often the
natural factor-graph representation because the checks generally couple the
local Pauli labels directly.  For CSS codes, the syndrome constraints are
already binary parity checks on the two components, so the joint-BP
factorization is the natural starting point.

\section{Factor Graphs and Sum-Product}

We use belief propagation in its standard meaning: the sum-product algorithm
applied to a probability factorization.  Consider variables
\(\boldsymbol{u}=(u_j)_{j\in V}\), where \(u_j\) takes values in a finite
alphabet \(\mathcal A_j\).  Suppose that a posterior distribution is given in
factorized form
\begin{equation}
  P(\boldsymbol{u})
  =
  \frac{1}{Z}
  \prod_{a\in F} f_a(\boldsymbol{u}_{\partial a}),
  \label{eq:generic_factorization}
\end{equation}
where \(F\) is an index set of factors, \(\partial a\subseteq V\) is the set of
variables on which \(f_a\) depends, and
\(\boldsymbol{u}_{\partial a}=(u_j)_{j\in\partial a}\).
This factorization defines a bipartite factor graph with variable nodes
\(V\), factor nodes \(F\), and an edge \((j,a)\) whenever \(j\in\partial a\).
For a variable node \(j\), write
\(\partial j=\{a\in F:j\in\partial a\}\).

The sum-product algorithm is a local message-passing algorithm on this graph.
Messages are nonnegative functions on the alphabet of one variable, and are
identified up to positive normalization when the message is nonzero.  In the
strictly positive case this is the usual positive-message convention; zero
entries can be handled by the same probability-domain equations, or by limits
when a log-domain expression is used.  With iteration index \(\ell\), the
variable-node processing is
\begin{equation}
  m_{j\to a}^{(\ell)}(u_j)
  \propto
  \prod_{b\in\partial j\setminus\{a\}}
  \widehat m_{b\to j}^{(\ell-1)}(u_j),
  \qquad a\in\partial j.
  \label{eq:generic_variable_update}
\end{equation}
The factor-node processing is
\begin{equation}
  \widehat m_{a\to j}^{(\ell)}(u_j)
  \propto
  \sum_{\boldsymbol{u}_{\partial a\setminus\{j\}}}
  f_a(\boldsymbol{u}_{\partial a})
  \prod_{k\in\partial a\setminus\{j\}}
  m_{k\to a}^{(\ell)}(u_k),
  \qquad j\in\partial a.
  \label{eq:generic_factor_update}
\end{equation}
Finally, the variable marginal is represented by the belief
\begin{equation}
  b_j^{(\ell)}(u_j)
  \propto
  \prod_{a\in\partial j}
  \widehat m_{a\to j}^{(\ell)}(u_j),
  \label{eq:generic_final_belief}
\end{equation}
and, if needed, a factor belief is
\[
  b_a^{(\ell)}(\boldsymbol{u}_{\partial a})
  \propto
  f_a(\boldsymbol{u}_{\partial a})
  \prod_{j\in\partial a}
  m_{j\to a}^{(\ell)}(u_j).
\]
On a tree factor graph, after messages have been passed from the leaves inward,
these beliefs give the exact marginals of \eqref{eq:generic_factorization}.
On a graph with cycles, the same local equations define the loopy BP
iteration.  The joint BP and four-state BP decoders below are obtained by
applying \eqref{eq:generic_variable_update}--\eqref{eq:generic_final_belief} to
two different factorizations of the same posterior.

\section{Posterior Factorization}

We work directly with a binary CSS code.  Let \(H^X\) and \(H^Z\) be the two
binary check matrices.  The superscript denotes the check type: rows of
\(H^X\) are \(X\)-type checks and rows of \(H^Z\) are \(Z\)-type checks.
Let \(H^X,H^Z\in\mathbb F_2^{m\times n}\) be binary CSS check matrices, and assume
the CSS orthogonality condition
\[
  H^X(H^Z)^{\mathsf T}=0.
\]
Let \(V\) be the set of \(\mathbb F_4\) variable nodes, in bijection with the
column-index set \(\{1,\ldots,n\}\) of \(H^X\) and \(H^Z\).  The row-index set
of each matrix is \(\{1,\ldots,m\}\).  The coupled binary factor graph has two
binary variable-node sets \(X\) and \(Z\), both indexed by \(\{1,\ldots,n\}\).
Thus a column number, the corresponding column, and the corresponding variable
node are denoted by the same symbol.  In formulas over a graph neighborhood, we
also identify a variable node with the value carried by that node.  For example,
\(x\in X\) denotes both a column of \(H^Z\) and the binary value on the
corresponding \(X\)-variable node; similarly \(z\in Z\) denotes both a column
of \(H^X\) and the binary value on the corresponding \(Z\)-variable node.  After
the \(\mathbb F_4\) labeling below is chosen, the four-state value carried by
node \(j\in\{1,\ldots,n\}\) is denoted by \(\alpha_j\).

For a row index \(i\in\{1,\ldots,m\}\), define the column neighborhoods
\[
  \partial_i^X=\{j\in\{1,\ldots,n\}:H^X_{ij}=1\},
  \qquad
  \partial_i^Z=\{j\in\{1,\ldots,n\}:H^Z_{ij}=1\}.
\]
For a column index \(j\in\{1,\ldots,n\}\), define the check neighborhoods
\[
  \partial^X_j=\{i\in\{1,\ldots,m\}:H^X_{ij}=1\},
  \qquad
  \partial^Z_j=\{i\in\{1,\ldots,m\}:H^Z_{ij}=1\}.
\]
The superscript indicates the check type: an \(X\)-type check constrains the
\(z\)-variables, and a \(Z\)-type check constrains the \(x\)-variables.
When the context is the binary factor graph, the same symbols
\(\partial_i^X\) and \(\partial_i^Z\) denote adjacent binary variable-node sets.
In the four-state factor graph, they denote the corresponding adjacent column
indices of four-state variable nodes in \(V\).
Represent a Pauli error by two binary assignments
\(\boldsymbol{x},\boldsymbol{z}\in\mathbb F_2^n\).  With the above
identification, the syndrome equations are
\[
  H^X\boldsymbol{z}=\boldsymbol{s}^Z,\qquad
  H^Z\boldsymbol{x}=\boldsymbol{s}^X,
\]
or equivalently
\[
  s_i^Z=\sum_{j\in\partial_i^X}z_j\quad(1\le i\le m),
  \qquad
s_i^X=\sum_{j\in\partial_i^Z}x_j\quad(1\le i\le m).
\]
Here the superscript on the syndrome denotes the detected error component, not
the measured check type.
Let \(Q_j(x,z)\) be the local prior distribution of
\((x_j,z_j)\in\mathbb F_2^2\) at \(j\in\{1,\ldots,n\}\).  For the depolarizing channel
with physical error probability \(p\),
\[
  Q_j(0,0)=1-p,\qquad
  Q_j(1,0)=Q_j(0,1)=Q_j(1,1)=p/3,
\]
but the definitions below allow arbitrary local priors.

We now define three decoders by specifying the posterior factorization to which
ordinary sum-product is applied.  The first definition is the natural binary
factor graph for CSS syndrome decoding.  It keeps the two binary Tanner graphs
visible, but it does not separate the two Pauli components because the local
factor \(Q_j(x_j,z_j)\) remains a joint prior.

\begin{definition}[Joint BP decoder]
\label{def:gf2_decoder}
Joint BP denotes the coupled binary belief-propagation decoder obtained by
applying sum-product to the CSS factor graph with the joint local prior
\(Q_j(x_j,z_j)\).
The posterior in the coupled binary representation is
\begin{equation}
  P_2(\boldsymbol{x}, \boldsymbol{z}\mid \boldsymbol{s}^X, \boldsymbol{s}^Z)
  =
  \frac{1}{Z_2}
  \prod_{j=1}^n Q_j\left(x_j, z_j\right)
  \prod_{i=1}^m
  \mathbf{1}\!\left\{\sum_{j'\in\partial_i^X}z_{j'}=s_i^Z\right\}
  \prod_{i=1}^m
  \mathbf{1}\!\left\{\sum_{j'\in\partial_i^Z}x_{j'}=s_i^X\right\}.
  \label{eq:posterior_factorization}
\end{equation}
All sums in the indicator functions are over \(\mathbb F_2\).

The associated factor graph has binary variable-node sets
\(X\) and \(Z\), each indexed by \(\{1,\ldots,n\}\).  A \(Z\)-type check \(i\) is connected
only to the nodes \(x_j\) with \(j\in\partial_i^Z\), and an \(X\)-type check
\(i\) is connected only to the nodes \(z_j\) with \(j\in\partial_i^X\).  The factor
\(Q_j(x_j,z_j)\) couples the two binary variables indexed by the same
\(j\in\{1,\ldots,n\}\).  Ordinary sum-product on this graph is joint BP.
\end{definition}

This definition is the binary factor-graph representation that the note
advocates for CSS codes.  It is binary at the check nodes and in the messages
sent to the check nodes, so the usual Tanner-graph viewpoint is retained.  At
the same time, the prior at each qubit is still the joint table \(Q_j\), so the
factorization contains the local \(X/Z\) channel correlation whenever the
channel has one.

For comparison, we also name the separate BP decoder.  This decoder is a
useful baseline because it is often what is meant when binary BP is contrasted
with four-state BP in the quantum-code decoding literature.  Its defining
feature is not the use of binary messages, but the replacement of the joint
local prior by two marginal priors.

\begin{definition}[Separate BP decoder]
\label{def:gf2_separated_decoder}
Separate BP denotes the decoder obtained by separating the two binary posterior
factors.
Define the marginal local priors
\[
  Q_j^X(x)=\sum_{z\in\mathbb F_2}Q_j(x,z),\qquad
  Q_j^Z(z)=\sum_{x\in\mathbb F_2}Q_j(x,z).
\]
The separate BP decoder replaces the joint local factor
\(Q_j(x_j,z_j)\) by the two marginal factors \(Q_j^X(x_j)\) and
\(Q_j^Z(z_j)\).  Equivalently, it applies ordinary sum-product separately to
the two binary posterior factors
\begin{align}
  P_X(\boldsymbol{x}\mid\boldsymbol{s}^X)
  &=
  \frac{1}{Z_X}
  \prod_{j=1}^n Q_j^X(x_j)
  \prod_{i=1}^m
  \mathbf{1}\!\left\{\sum_{j'\in\partial_i^Z}x_{j'}=s_i^X\right\},
  \label{eq:separated_x_posterior}
  \\
  P_Z(\boldsymbol{z}\mid\boldsymbol{s}^Z)
  &=
  \frac{1}{Z_Z}
  \prod_{j=1}^n Q_j^Z(z_j)
  \prod_{i=1}^m
  \mathbf{1}\!\left\{\sum_{j'\in\partial_i^X}z_{j'}=s_i^Z\right\}.
  \label{eq:separated_z_posterior}
\end{align}
Thus the \(X\)-component and \(Z\)-component decoders are two disconnected
binary Tanner-graph decoders.  This decoder is binary in both alphabet and
factor graph, but it is not the same algorithm as joint BP: the
local \(X/Z\) channel correlation in \(Q_j(x_j,z_j)\) has been discarded by
marginalization.
\end{definition}

This definition isolates the source of the approximation made by separate
binary decoding.  The graph has two standard binary Tanner decoders, but there
is no factor through which information about the \(x_j\) component can affect
the \(z_j\) component, or conversely.  Therefore separate BP is not
the right object to compare with four-state BP when the question is whether a
binary factor graph can use the joint prior \(Q_j(x_j,z_j)\).

The third definition rewrites the same CSS posterior with one four-state label
per qubit.  This is the form closest to the usual Pauli-label or
\(\mathbb F_4\)-labeled description.  The purpose of the definition is to make
clear that the four labels are used here as a relabeling of the pair
\((x_j,z_j)\), not as a change of the CSS code itself.

\begin{definition}[Four-state BP decoder]
\label{def:gf4_decoder}
Write
\[
  \mathbb F_4=\{0,1,\omega,\omega^2\},\qquad \omega^2=\omega+1.
\]
Use the standard labeling
\[
  \phi:\mathbb F_2^2\to\mathbb F_4,\qquad
  \phi(x,z)=x+\omega z.
\]
Thus \(0,1,\omega,\omega^2\) correspond to
\((0,0),(1,0),(0,1),(1,1)\), respectively.  For
\(\alpha\in\mathbb F_4\), write
\(\alpha=x(\alpha)+\omega z(\alpha)\), equivalently
\(\phi^{-1}(\alpha)=(x(\alpha),z(\alpha))\).  For each \(j\in\{1,\ldots,n\}\), define
\[
  Q_j^\phi(\alpha)=Q_j(x(\alpha),z(\alpha)).
\]
The \(\mathbb F_4\)-labeled posterior is
\begin{align}
  P_4(\boldsymbol{\alpha}
  \mid \boldsymbol{s}^X,\boldsymbol{s}^Z)
  &=
  \frac{1}{Z_4}
  \prod_{j=1}^n Q_j^\phi(\alpha_j)
  \prod_{i=1}^m
  \mathbf 1\!\left\{\sum_{j\in\partial_i^X}
  z(\alpha_j)=s_i^Z\right\}
  \notag\\
  &\qquad{}\times
  \prod_{i=1}^m
  \mathbf 1\!\left\{\sum_{j\in\partial_i^Z}
  x(\alpha_j)=s_i^X\right\}.
  \label{eq:gf4_posterior_factorization}
\end{align}
Sum-product on this factor graph is called the four-state BP decoder.
The choice \(\phi(x,z)=x+\omega z\) only fixes which \(\mathbb F_4\) labels
denote the \(X\)- and \(Z\)-components.
\end{definition}

This definition puts the four-state decoder on the same footing as the coupled
binary decoder: both are ordinary sum-product algorithms applied to explicitly
written posterior factorizations.  The only structural difference is whether
the local pair \((x_j,z_j)\) is represented by two binary variables connected
through \(Q_j\), or by the single label \(\alpha_j=x_j+\omega z_j\).

The three decoders considered in this note differ by their posterior
factorizations.  Separate BP uses the two marginal priors
\(Q_j^X(x_j)\) and \(Q_j^Z(z_j)\) and therefore has two disconnected binary
factor graphs.  Joint BP uses the joint local prior
\(Q_j(x_j,z_j)\), so the two binary Tanner graphs are connected through a
local factor at each qubit.  Four-state BP uses one four-state variable
\(\alpha_j\) per qubit and the equivalent local prior
\(Q_j^\phi(\alpha_j)\).  The equivalence theorem below compares
joint BP with four-state BP.  It does not compare either of them with
separate BP.

\section{Example: A Randomized Length-24 \texorpdfstring{\((2,6)\)}{(2,6)}-Regular CSS Pair}

For a \((2,6)\)-regular binary parity-check matrix, the edge-count constraint is
\(2n=6m\).  Keeping the same code length \(n=24\) gives \(m=8\).
The following CSS matrix pair is a length-24 example obtained within the
randomized construction framework of \cite{okadaKasai25RandomConstruction},
specialized to the \((2,6)\)-regular degree profile.  It satisfies
\[
  H^X(H^Z)^{\mathsf T}=0,\qquad
  \mathrm{wt}_{\mathrm{col}}(H^X)=\mathrm{wt}_{\mathrm{col}}(H^Z)=2,\qquad
  \mathrm{wt}_{\mathrm{row}}(H^X)=\mathrm{wt}_{\mathrm{row}}(H^Z)=6.
\]
The row supports define the two check matrices:
\[
\begin{array}{c@{\quad}l@{\qquad}c@{\quad}l}
i & \partial_i^X & i & \partial_i^Z\\ \hline
1&\{10,14,17,19,21,22\}&1&\{5,15,17,18,20,22\}\\
2&\{2,3,4,7,8,24\}&2&\{2,8,12,13,14,19\}\\
3&\{3,4,6,9,11,16\}&3&\{1,3,4,10,14,18\}\\
4&\{1,5,11,12,13,18\}&4&\{4,9,10,21,23,24\}\\
5&\{5,9,10,12,14,17\}&5&\{3,8,9,15,17,19\}\\
6&\{2,6,8,15,16,20\}&6&\{1,7,11,16,20,24\}\\
7&\{1,7,18,20,23,24\}&7&\{2,6,7,11,13,23\}\\
8&\{13,15,19,21,22,23\}&8&\{5,6,12,16,21,22\}.
\end{array}
\]
The rows of \(H^X\) constrain the \(z\)-assignment, while the rows of \(H^Z\)
constrain the \(x\)-assignment.
Equivalently, the matrices themselves are
displayed below.  The highlighted entries \(H^X_{1,10}=1\) and \(H^Z_{1,5}=1\)
are the two representative edges highlighted again in the factor graphs.
{\scriptsize
\[
\setlength{\arraycolsep}{1.4pt}
H^X=
\left[
\begin{array}{*{24}{c}}
\matzero&\matzero&\matzero&\matzero&\matzero&\matzero&\matzero&\matzero&\matzero&\hxedgeone&\matzero&\matzero&\matzero&1&\matzero&\matzero&1&\matzero&1&\matzero&1&1&\matzero&\matzero\\
\matzero&1&1&1&\matzero&\matzero&1&1&\matzero&\matzero&\matzero&\matzero&\matzero&\matzero&\matzero&\matzero&\matzero&\matzero&\matzero&\matzero&\matzero&\matzero&\matzero&1\\
\matzero&\matzero&1&1&\matzero&1&\matzero&\matzero&1&\matzero&1&\matzero&\matzero&\matzero&\matzero&1&\matzero&\matzero&\matzero&\matzero&\matzero&\matzero&\matzero&\matzero\\
1&\matzero&\matzero&\matzero&1&\matzero&\matzero&\matzero&\matzero&\matzero&1&1&1&\matzero&\matzero&\matzero&\matzero&1&\matzero&\matzero&\matzero&\matzero&\matzero&\matzero\\
\matzero&\matzero&\matzero&\matzero&1&\matzero&\matzero&\matzero&1&1&\matzero&1&\matzero&1&\matzero&\matzero&1&\matzero&\matzero&\matzero&\matzero&\matzero&\matzero&\matzero\\
\matzero&1&\matzero&\matzero&\matzero&1&\matzero&1&\matzero&\matzero&\matzero&\matzero&\matzero&\matzero&1&1&\matzero&\matzero&\matzero&1&\matzero&\matzero&\matzero&\matzero\\
1&\matzero&\matzero&\matzero&\matzero&\matzero&1&\matzero&\matzero&\matzero&\matzero&\matzero&\matzero&\matzero&\matzero&\matzero&\matzero&1&\matzero&1&\matzero&\matzero&1&1\\
\matzero&\matzero&\matzero&\matzero&\matzero&\matzero&\matzero&\matzero&\matzero&\matzero&\matzero&\matzero&1&\matzero&1&\matzero&\matzero&\matzero&1&\matzero&1&1&1&\matzero
\end{array}
\right],
\]
}
{\scriptsize
\[
\setlength{\arraycolsep}{1.4pt}
H^Z=
\left[
\begin{array}{*{24}{c}}
\matzero&\matzero&\matzero&\matzero&\hzedgeone&\matzero&\matzero&\matzero&\matzero&\matzero&\matzero&\matzero&\matzero&\matzero&1&\matzero&1&1&\matzero&1&\matzero&1&\matzero&\matzero\\
\matzero&1&\matzero&\matzero&\matzero&\matzero&\matzero&1&\matzero&\matzero&\matzero&1&1&1&\matzero&\matzero&\matzero&\matzero&1&\matzero&\matzero&\matzero&\matzero&\matzero\\
1&\matzero&1&1&\matzero&\matzero&\matzero&\matzero&\matzero&1&\matzero&\matzero&\matzero&1&\matzero&\matzero&\matzero&1&\matzero&\matzero&\matzero&\matzero&\matzero&\matzero\\
\matzero&\matzero&\matzero&1&\matzero&\matzero&\matzero&\matzero&1&1&\matzero&\matzero&\matzero&\matzero&\matzero&\matzero&\matzero&\matzero&\matzero&\matzero&1&\matzero&1&1\\
\matzero&\matzero&1&\matzero&\matzero&\matzero&\matzero&1&1&\matzero&\matzero&\matzero&\matzero&\matzero&1&\matzero&1&\matzero&1&\matzero&\matzero&\matzero&\matzero&\matzero\\
1&\matzero&\matzero&\matzero&\matzero&\matzero&1&\matzero&\matzero&\matzero&1&\matzero&\matzero&\matzero&\matzero&1&\matzero&\matzero&\matzero&1&\matzero&\matzero&\matzero&1\\
\matzero&1&\matzero&\matzero&\matzero&1&1&\matzero&\matzero&\matzero&1&\matzero&1&\matzero&\matzero&\matzero&\matzero&\matzero&\matzero&\matzero&\matzero&\matzero&1&\matzero\\
\matzero&\matzero&\matzero&\matzero&1&1&\matzero&\matzero&\matzero&\matzero&\matzero&1&\matzero&\matzero&\matzero&1&\matzero&\matzero&\matzero&\matzero&1&1&\matzero&\matzero
\end{array}
\right].
\]
}
This displayed support table is also a compact certificate for the stated
CSS orthogonality condition: every row has size \(6\), every column index \(j\in\{1,\ldots,24\}\)
appears exactly twice in the \(H^X\) table and exactly twice in the
\(H^Z\) table, and every intersection
\(\partial_i^X\cap\partial_k^Z\) has even cardinality for all \(i,k\).
The last condition is equivalent to \((H^X(H^Z)^{\mathsf T})_{ik}=0\) over
\(\mathbb F_2\).  In this example the intersection sizes are \(0\), \(2\), or \(4\),
with multiplicities \(17,46\), and \(1\), respectively.

Figure~\ref{fig:example_gf2_factor_graph} shows the coupled \(\mathbb F_2\) factor graph.
The \(Z\)-type checks are attached only to variables \(x_j\), the \(X\)-type checks
are attached only to variables \(z_j\), and the factors \(Q_j\) couple
\((x_j,z_j)\) locally at each \(j\in\{1,\ldots,n\}\).
In the figures, the label \(s_i^X\) denotes the \(Z\)-type syndrome-conditioned
check factor \(\mathbf{1}\{\sum_{j\in\partial_i^Z}x_j=s_i^X\}\), while
\(s_i^Z\) denotes the \(X\)-type syndrome-conditioned check factor
\(\mathbf{1}\{\sum_{j\in\partial_i^X}z_j=s_i^Z\}\).

\begin{figure}[t]
\centering
\resizebox{\linewidth}{!}{%
\begin{tikzpicture}[
  x=0.66cm,
  y=1cm,
  var/.style={circle,draw,minimum size=1.54mm,inner sep=0pt},
  xvar/.style={var,draw=blue!55!black,fill=blue!18},
  zvar/.style={var,draw=green!45!black,fill=green!18},
  f4var/.style={var,draw=violet!65!black,fill=violet!18},
  local/.style={rectangle,draw=orange!70!black,fill=orange!22,minimum size=1.39mm,inner sep=0pt},
  factor/.style={rectangle,draw,minimum size=1.39mm,inner sep=0pt},
  zcheck/.style={factor,draw=red!65!black,fill=red!16},
  xcheck/.style={factor,draw=cyan!65!black,fill=cyan!18},
  lab/.style={font=\tiny,inner sep=0.2pt},
  edge/.style={draw=black!38,line width=0.25pt},
  hzed/.style={draw=red!75!black,line width=1.05pt},
  hxed/.style={draw=cyan!75!black,line width=1.05pt},
  every node/.style={font=\scriptsize}
]
  \foreach \i/\x in {1/0,2/1,3/2,4/3,5/4,6/5,7/6,8/7,9/8,10/9,11/10,12/11,
    13/12,14/13,15/14,16/15,17/16,18/17,19/18,20/19,21/20,22/21,23/22,24/23} {
    \node[xvar] (X\i) at (\x,2.0) {};
    \node[local] (Q\i) at (\x,1.0) {};
    \node[zvar] (Z\i) at (\x,0.0) {};
    \draw (X\i) -- (Q\i);
    \draw (Q\i) -- (Z\i);
  }
  \node[lab,left=1.5pt of X1] {$x_j$};
  \node[lab,left=1.5pt of Q1] {$Q_j$};
  \node[lab,left=1.5pt of Z1] {$z_j$};
  \foreach \j/\x in {1/0.5,2/3.7,3/6.9,4/10.1,5/13.3,6/16.5,7/19.7,8/22.9} {
    \node[zcheck] (HZ\j) at (\x,4.295) {};
    \node[xcheck] (HX\j) at (\x,-2.295) {};
  }
  \node[lab,left=1.5pt of HZ1] {$s_i^X$};
  \node[lab,left=1.5pt of HX1] {$s_i^Z$};
  \node[lab,red!70!black] at (23.9,3.2) {$H^Z$};
  \node[lab,cyan!70!black] at (23.9,-1.2) {$H^X$};
  \foreach \i in {5,15,17,18,20,22} {\draw[edge] (HZ1.south) -- (X\i.north);}
  \foreach \i in {2,8,12,13,14,19} {\draw[edge] (HZ2.south) -- (X\i.north);}
  \foreach \i in {1,3,4,10,14,18} {\draw[edge] (HZ3.south) -- (X\i.north);}
  \foreach \i in {4,9,10,21,23,24} {\draw[edge] (HZ4.south) -- (X\i.north);}
  \foreach \i in {3,8,9,15,17,19} {\draw[edge] (HZ5.south) -- (X\i.north);}
  \foreach \i in {1,7,11,16,20,24} {\draw[edge] (HZ6.south) -- (X\i.north);}
  \foreach \i in {2,6,7,11,13,23} {\draw[edge] (HZ7.south) -- (X\i.north);}
  \foreach \i in {5,6,12,16,21,22} {\draw[edge] (HZ8.south) -- (X\i.north);}
  \foreach \i in {10,14,17,19,21,22} {\draw[edge] (HX1.north) -- (Z\i.south);}
  \foreach \i in {2,3,4,7,8,24} {\draw[edge] (HX2.north) -- (Z\i.south);}
  \foreach \i in {3,4,6,9,11,16} {\draw[edge] (HX3.north) -- (Z\i.south);}
  \foreach \i in {1,5,11,12,13,18} {\draw[edge] (HX4.north) -- (Z\i.south);}
  \foreach \i in {5,9,10,12,14,17} {\draw[edge] (HX5.north) -- (Z\i.south);}
  \foreach \i in {2,6,8,15,16,20} {\draw[edge] (HX6.north) -- (Z\i.south);}
  \foreach \i in {1,7,18,20,23,24} {\draw[edge] (HX7.north) -- (Z\i.south);}
  \foreach \i in {13,15,19,21,22,23} {\draw[edge] (HX8.north) -- (Z\i.south);}
  \draw[hzed] (HZ1.south) -- (X5.north);
  \draw[hxed] (HX1.north) -- (Z10.south);
\end{tikzpicture}
}
\caption{Coupled \(\mathbb F_2\) factor graph for the length-24 \((2,6)\)-regular
CSS pair.  The labels \(s_i^X\) and \(s_i^Z\) denote the syndrome-conditioned
check factors, and the middle factors \(Q_j(x_j,z_j)\) carry the local X/Z
correlation.  The highlighted edges correspond to \(H^Z_{1,5}=1\) and
\(H^X_{1,10}=1\).}
\label{fig:example_gf2_factor_graph}
\end{figure}

Figure~\ref{fig:separated_gf2_factor_graph} shows the separate BP
decoder that ignores the local \(X/Z\) correlation.  In this decoder, the joint
factor \(Q_j(x_j,z_j)\) is replaced by its two marginals
\[
  Q_j^X(x_j)=\sum_{z_j\in\mathbb F_2}Q_j(x_j,z_j),\qquad
  Q_j^Z(z_j)=\sum_{x_j\in\mathbb F_2}Q_j(x_j,z_j).
\]
The \(x\)- and \(z\)-decoders are then two disconnected Tanner graphs.  This is
the usual separate BP baseline, not joint BP
discussed in this note.

\begin{figure}[t]
\centering
\resizebox{\linewidth}{!}{%
\begin{tikzpicture}[
  x=0.66cm,
  y=1cm,
  var/.style={circle,draw,minimum size=1.54mm,inner sep=0pt},
  xvar/.style={var,draw=blue!55!black,fill=blue!18},
  zvar/.style={var,draw=green!45!black,fill=green!18},
  local/.style={rectangle,draw=orange!70!black,fill=orange!22,minimum size=1.39mm,inner sep=0pt},
  factor/.style={rectangle,draw,minimum size=1.39mm,inner sep=0pt},
  zcheck/.style={factor,draw=red!65!black,fill=red!16},
  xcheck/.style={factor,draw=cyan!65!black,fill=cyan!18},
  lab/.style={font=\tiny,inner sep=0.2pt},
  edge/.style={draw=black!38,line width=0.25pt},
  hzed/.style={draw=red!75!black,line width=1.05pt},
  hxed/.style={draw=cyan!75!black,line width=1.05pt},
  every node/.style={font=\scriptsize}
]
  \foreach \i/\x in {1/0,2/1,3/2,4/3,5/4,6/5,7/6,8/7,9/8,10/9,11/10,12/11,
    13/12,14/13,15/14,16/15,17/16,18/17,19/18,20/19,21/20,22/21,23/22,24/23} {
    \node[xvar] (SX\i) at (\x,1.0) {};
    \node[local] (QX\i) at (\x,0.34) {};
    \draw (SX\i) -- (QX\i);
    \node[local] (QZ\i) at (\x,-0.84) {};
    \node[zvar] (SZ\i) at (\x,-1.5) {};
    \draw (QZ\i) -- (SZ\i);
  }
  \node[lab,left=1.5pt of SX1] {$x_j$};
  \node[lab,left=1.5pt of QX1] {$Q_j^X$};
  \node[lab,left=1.5pt of QZ1] {$Q_j^Z$};
  \node[lab,left=1.5pt of SZ1] {$z_j$};
  \foreach \j/\x in {1/0.5,2/3.7,3/6.9,4/10.1,5/13.3,6/16.5,7/19.7,8/22.9} {
    \node[zcheck] (SHZ\j) at (\x,2.95) {};
    \node[xcheck] (SHX\j) at (\x,-3.45) {};
  }
  \node[lab,left=1.5pt of SHZ1] {$s_i^X$};
  \node[lab,left=1.5pt of SHX1] {$s_i^Z$};
  \node[lab,red!70!black] at (23.9,2.0) {$H^Z$};
  \node[lab,cyan!70!black] at (23.9,-2.5) {$H^X$};
  \foreach \i in {5,15,17,18,20,22} {\draw[edge] (SHZ1.south) -- (SX\i.north);}
  \foreach \i in {2,8,12,13,14,19} {\draw[edge] (SHZ2.south) -- (SX\i.north);}
  \foreach \i in {1,3,4,10,14,18} {\draw[edge] (SHZ3.south) -- (SX\i.north);}
  \foreach \i in {4,9,10,21,23,24} {\draw[edge] (SHZ4.south) -- (SX\i.north);}
  \foreach \i in {3,8,9,15,17,19} {\draw[edge] (SHZ5.south) -- (SX\i.north);}
  \foreach \i in {1,7,11,16,20,24} {\draw[edge] (SHZ6.south) -- (SX\i.north);}
  \foreach \i in {2,6,7,11,13,23} {\draw[edge] (SHZ7.south) -- (SX\i.north);}
  \foreach \i in {5,6,12,16,21,22} {\draw[edge] (SHZ8.south) -- (SX\i.north);}
  \foreach \i in {10,14,17,19,21,22} {\draw[edge] (SHX1.north) -- (SZ\i.south);}
  \foreach \i in {2,3,4,7,8,24} {\draw[edge] (SHX2.north) -- (SZ\i.south);}
  \foreach \i in {3,4,6,9,11,16} {\draw[edge] (SHX3.north) -- (SZ\i.south);}
  \foreach \i in {1,5,11,12,13,18} {\draw[edge] (SHX4.north) -- (SZ\i.south);}
  \foreach \i in {5,9,10,12,14,17} {\draw[edge] (SHX5.north) -- (SZ\i.south);}
  \foreach \i in {2,6,8,15,16,20} {\draw[edge] (SHX6.north) -- (SZ\i.south);}
  \foreach \i in {1,7,18,20,23,24} {\draw[edge] (SHX7.north) -- (SZ\i.south);}
  \foreach \i in {13,15,19,21,22,23} {\draw[edge] (SHX8.north) -- (SZ\i.south);}
  \draw[hzed] (SHZ1.south) -- (SX5.north);
  \draw[hxed] (SHX1.north) -- (SZ10.south);
\end{tikzpicture}
}
\caption{Separated \(\mathbb F_2\) Tanner graphs obtained by discarding the local
X/Z correlation.  The upper graph decodes \(x\) from \(H^Z\boldsymbol{x}=\boldsymbol{s}^X\)
using \(Q_j^X\), and the lower graph decodes \(z\) from
\(H^X\boldsymbol{z}=\boldsymbol{s}^Z\) using \(Q_j^Z\).  There is no factor
connecting \(x_j\) and \(z_j\).}
\label{fig:separated_gf2_factor_graph}
\end{figure}

Figure~\ref{fig:example_gf4_factor_graph} shows the \(\mathbb F_4\)-labeled
four-state factor graph for the same posterior.  The variable
\(\alpha_j=x_j+\omega z_j\) is the value carried by the node \(j\in\{1,\ldots,n\}\).
The same parity constraints now read only the appropriate binary
coordinate of each \(\alpha_j\).  The check-node labels \(s_i^X\) and \(s_i^Z\)
have the same meaning as in Figure~\ref{fig:example_gf2_factor_graph}.

\begin{figure}[t]
\centering
\resizebox{\linewidth}{!}{%
\begin{tikzpicture}[
  x=0.66cm,
  y=1cm,
  var/.style={circle,draw,minimum size=1.54mm,inner sep=0pt},
  f4var/.style={var,draw=violet!65!black,fill=violet!18},
  local/.style={rectangle,draw=orange!70!black,fill=orange!22,minimum size=1.39mm,inner sep=0pt},
  factor/.style={rectangle,draw,minimum size=1.39mm,inner sep=0pt},
  zcheck/.style={factor,draw=red!65!black,fill=red!16},
  xcheck/.style={factor,draw=cyan!65!black,fill=cyan!18},
  lab/.style={font=\tiny,inner sep=0.2pt},
  edge/.style={draw=black!38,line width=0.25pt},
  hzed/.style={draw=red!75!black,line width=1.05pt},
  hxed/.style={draw=cyan!75!black,line width=1.05pt},
  every node/.style={font=\scriptsize}
]
  \foreach \i/\x in {1/0,2/1,3/2,4/3,5/4,6/5,7/6,8/7,9/8,10/9,11/10,12/11,
    13/12,14/13,15/14,16/15,17/16,18/17,19/18,20/19,21/20,22/21,23/22,24/23} {
    \node[f4var] (A\i) at (\x,0) {};
    \node[local] (Q\i) at ({\x-0.76},0.56) {};
    \draw (Q\i) -- (A\i);
  }
  \node[lab,left=1.5pt of A1] {$\alpha_j$};
  \node[lab,left=1.5pt of Q1] {$Q_j^\phi$};
  \foreach \j/\x in {1/0.5,2/3.7,3/6.9,4/10.1,5/13.3,6/16.5,7/19.7,8/22.9} {
    \node[zcheck] (HZ\j) at (\x,2.6) {};
    \node[xcheck] (HX\j) at (\x,-2.0) {};
  }
  \node[lab,left=1.5pt of HZ1] {$s_i^X$};
  \node[lab,left=1.5pt of HX1] {$s_i^Z$};
  \node[lab,red!70!black] at (23.9,1.43) {$H^Z$};
  \node[lab,cyan!70!black] at (23.9,-1.1) {$H^X$};
  \foreach \i in {5,15,17,18,20,22} {\draw[edge] (HZ1.south) -- (A\i.north);}
  \foreach \i in {2,8,12,13,14,19} {\draw[edge] (HZ2.south) -- (A\i.north);}
  \foreach \i in {1,3,4,10,14,18} {\draw[edge] (HZ3.south) -- (A\i.north);}
  \foreach \i in {4,9,10,21,23,24} {\draw[edge] (HZ4.south) -- (A\i.north);}
  \foreach \i in {3,8,9,15,17,19} {\draw[edge] (HZ5.south) -- (A\i.north);}
  \foreach \i in {1,7,11,16,20,24} {\draw[edge] (HZ6.south) -- (A\i.north);}
  \foreach \i in {2,6,7,11,13,23} {\draw[edge] (HZ7.south) -- (A\i.north);}
  \foreach \i in {5,6,12,16,21,22} {\draw[edge] (HZ8.south) -- (A\i.north);}
  \foreach \i in {10,14,17,19,21,22} {\draw[edge] (HX1.north) -- (A\i.south);}
  \foreach \i in {2,3,4,7,8,24} {\draw[edge] (HX2.north) -- (A\i.south);}
  \foreach \i in {3,4,6,9,11,16} {\draw[edge] (HX3.north) -- (A\i.south);}
  \foreach \i in {1,5,11,12,13,18} {\draw[edge] (HX4.north) -- (A\i.south);}
  \foreach \i in {5,9,10,12,14,17} {\draw[edge] (HX5.north) -- (A\i.south);}
  \foreach \i in {2,6,8,15,16,20} {\draw[edge] (HX6.north) -- (A\i.south);}
  \foreach \i in {1,7,18,20,23,24} {\draw[edge] (HX7.north) -- (A\i.south);}
  \foreach \i in {13,15,19,21,22,23} {\draw[edge] (HX8.north) -- (A\i.south);}
  \draw[hzed] (HZ1.south) -- (A5.north);
  \draw[hxed] (HX1.north) -- (A10.south);
\end{tikzpicture}
}
\caption{\(\mathbb F_4\)-labeled factor graph for the same CSS pair.  The
four-state variable \(\alpha_j=x_j+\omega z_j\) is identified with its variable
node, and the syndrome-conditioned check factors \(s_i^Z\) and \(s_i^X\) use
\(x(\alpha_j)\) or \(z(\alpha_j)\) as required.  The highlighted edges are the
same two matrix entries \(H^Z_{1,5}=1\) and \(H^X_{1,10}=1\).}
\label{fig:example_gf4_factor_graph}
\end{figure}

\section{Equivalence}

For \(j\in\{1,\ldots,n\}\) and \(i\in\partial^X_j\), let
\(m_{j\to i}^{X,(\ell)}(\alpha)\) and
\(\widehat{m}_{i\to j}^{X,(\ell)}(\alpha)\) be the \(\mathbb F_4\)-labeled
variable-to-check and check-to-variable messages on the edge between
\(\alpha_j\) and the \(X\)-type check \(i\).  For \(i\in\partial^Z_j\), define
\(m_{j\to i}^{Z,(\ell)}(\alpha)\) and
\(\widehat{m}_{i\to j}^{Z,(\ell)}(\alpha)\) analogously.  In the coupled binary
graph, write \(\widehat{\nu}_{i\to j}^{(\ell)}(z)\) for an \(X\)-type check
message to \(z_j\), and \(\widehat{\mu}_{i\to j}^{(\ell)}(x)\) for a
\(Z\)-type check message to \(x_j\).  Let \(\nu_{j\to i}^{(\ell)}(z)\) and
\(\mu_{j\to i}^{(\ell)}(x)\) denote the corresponding binary messages from
\(z_j\) and \(x_j\) to the checks.
The joint qubit belief associated with the local factor \(Q_j\), and the
corresponding \(\mathbb F_4\)-labeled qubit belief, are
\begin{align}
  b_{2,j}^{(\ell)}(x,z)
  &\propto
  Q_j(x,z)
  \prod_{i\in\partial^X_j}
  \widehat{\nu}_{i\to j}^{(\ell)}(z)
  \prod_{i\in\partial^Z_j}
  \widehat{\mu}_{i\to j}^{(\ell)}(x),
  \label{eq:gf2_belief}\\
  b_{4,j}^{(\ell)}(\alpha)
  &\propto
  Q_j^\phi(\alpha)
  \prod_{i\in\partial^X_j}
  \widehat{m}_{i\to j}^{X,(\ell)}(\alpha)
  \prod_{i\in\partial^Z_j}
  \widehat{m}_{i\to j}^{Z,(\ell)}(\alpha).
  \label{eq:gf4_belief}
\end{align}
The phrase ``initialized compatibly'' below means that the initial messages
already satisfy the marginal identities in
\eqref{eq:check_message_x_equiv}--\eqref{eq:var_message_z_equiv}.  The usual
uniform initialization of all check-to-variable messages is compatible; so is
any deterministic initialization obtained by assigning four-state messages and
then taking the binary marginals displayed in
\eqref{eq:var_message_x_equiv}--\eqref{eq:var_message_z_equiv}.
The proof below uses only the standard sum-product/distributive-law
manipulation of moving sums past products when a factor depends on only part of
a variable label \cite{ajiMcEliece00,kschischang01}.
All BP update equations are written with \(\propto\).  The hidden constants are
edge-local positive normalizations; they are useful for storing messages as
probability vectors, but their numerical values are not part of the algorithmic
content.  One may normalize each outgoing message arbitrarily, or work with
ratios/log-ratios where these constants cancel.
The exact equivalence below is a statement about these probability-domain
sum-product equations.  If one adds approximations or implementation choices
such as min-sum, damping, clipping, or quantum-code-specific post-processing,
the same equivalence is guaranteed only when the modification is transported
through the same relabeling and marginalization rules.
When a hard decision is needed, fix an arbitrary common tie-breaking order on
\(\mathbb F_2^2\), and use the induced order on \(\mathbb F_4\) through
\(\phi\).  Define the \(\mathbb F_4\)-labeled estimate by
\[
  \widehat{\alpha}_j^{(\ell)}
  =
  \operatorname*{arg\,max}_{\alpha\in\mathbb F_4}
  b_{4,j}^{(\ell)}(\alpha),
\]
and define the joint binary estimate by
\[
  (\widetilde{x}_j^{(\ell)},\widetilde{z}_j^{(\ell)})
  =
  \operatorname*{arg\,max}_{(x,z)\in\mathbb F_2^2}
  b_{2,j}^{(\ell)}(x,z).
\]
Separately, define the componentwise binary estimates from the products of
messages around the individual binary variables:
\[
  \widehat{z}_j^{(\ell)}
  =
  \operatorname*{arg\,max}_{z\in\mathbb F_2}
  \prod_{i\in\partial^X_j}
  \widehat{\nu}_{i\to j}^{(\ell)}(z),
  \qquad
  \widehat{x}_j^{(\ell)}
  =
  \operatorname*{arg\,max}_{x\in\mathbb F_2}
  \prod_{i\in\partial^Z_j}
  \widehat{\mu}_{i\to j}^{(\ell)}(x).
\]
These componentwise estimates are distinct from the joint estimate in general,
because the local factor \(Q_j(x,z)\) can correlate \(x_j\) and \(z_j\).  They
should also be distinguished from componentwise posterior-marginal decisions,
which would be
\[
  \bar{z}_j^{(\ell)}
  =
  \operatorname*{arg\,max}_{z\in\mathbb F_2}
  \sum_{x\in\mathbb F_2} b_{2,j}^{(\ell)}(x,z),
  \qquad
  \bar{x}_j^{(\ell)}
  =
  \operatorname*{arg\,max}_{x\in\mathbb F_2}
  \sum_{z\in\mathbb F_2} b_{2,j}^{(\ell)}(x,z).
\]

The next theorem is the central claim of the note.  It states that the coupled
binary decoder and the four-state decoder are not merely similar in output:
they are the same sum-product computation after relabeling the local state and
marginalizing the unused binary component of a four-state message.  This is the
precise sense in which joint BP can use the same local \(X/Z\)
correlation as four-state BP.

\begin{theorem}[Equivalence of joint BP and four-state BP]
\label{thm:gf2_gf4_equivalence}
Consider the decoders in Definitions~\ref{def:gf2_decoder} and
\ref{def:gf4_decoder}.  For every
\(\boldsymbol{x},\boldsymbol{z}\in\mathbb F_2^n\), define
\(\phi(\boldsymbol{x},\boldsymbol{z})\in\mathbb F_4^n\) by
\((\phi(\boldsymbol{x},\boldsymbol{z}))_j=\phi(x_j,z_j)\).  Then
\begin{equation}
  P_4\!\left(\phi(\boldsymbol{x},\boldsymbol{z})
  \mid \boldsymbol{s}^X,\boldsymbol{s}^Z\right)
  =
  P_2(\boldsymbol{x},\boldsymbol{z}
  \mid \boldsymbol{s}^X,\boldsymbol{s}^Z).
  \label{eq:posterior_relabeling}
\end{equation}
BP messages are identified up to positive normalization constants.  These
constants need not be computed explicitly.  If the two
decoders are initialized compatibly, then for every iteration \(\ell\),
\begin{align}
  \widehat{m}_{i\to j}^{X,(\ell)}(\phi(x,z))
  &\propto
  \widehat{\nu}_{i\to j}^{(\ell)}(z),
  \qquad i\in\partial^X_j,
  \label{eq:check_message_x_equiv}\\
  \widehat{m}_{i\to j}^{Z,(\ell)}(\phi(x,z))
  &\propto
  \widehat{\mu}_{i\to j}^{(\ell)}(x),
  \qquad i\in\partial^Z_j.
  \label{eq:check_message_z_equiv}
\end{align}
Moreover, the \(\mathbb F_4\)-labeled variable-to-check messages marginalize to the
coupled binary variable-to-check messages:
\begin{align}
  \nu_{j\to i}^{(\ell)}(z)
  &\propto
  \sum_{x\in\mathbb F_2}
  m_{j\to i}^{X,(\ell)}(\phi(x,z)),
  \qquad i\in\partial^X_j,
  \label{eq:var_message_x_equiv}\\
  \mu_{j\to i}^{(\ell)}(x)
  &\propto
  \sum_{z\in\mathbb F_2}
  m_{j\to i}^{Z,(\ell)}(\phi(x,z)),
  \qquad i\in\partial^Z_j.
  \label{eq:var_message_z_equiv}
\end{align}
The qubit beliefs satisfy
\begin{equation}
  b_{4,j}^{(\ell)}(\phi(\xi,\zeta))
  =
  b_{2,j}^{(\ell)}(\xi,\zeta),
  \qquad j\in\{1,\ldots,n\},\quad(\xi,\zeta)\in\mathbb F_2^2 .
  \label{eq:belief_equivalence}
\end{equation}
Thus the two decoders perform the same sum-product computation for the same
posterior distribution in different labels, and their hard decisions correspond
under \(\phi\):
\[
  \widehat{\alpha}_j^{(\ell)}
  =
  \phi\!\left(\widetilde{x}_j^{(\ell)},\widetilde{z}_j^{(\ell)}\right).
\]
\end{theorem}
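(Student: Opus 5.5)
The proof has three parts: the posterior identity, an induction on the message identities, and the belief and hard-decision statements.

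\emph{Posterior identity.} First prove \eqref{eq:posterior_relabeling} by direct substitution.

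Since \(\phi\) is a bijection with \(x(\phi(x,z))=x\) and \(z(\phi(x,z))=z\), the unnormalized products agree term by term. Each prior factor satisfies \(Q_j^\phi(\phi(x_j,z_j))=Q_j(x_j,z_j)\), and each indicator factor in \eqref{eq:gf4_posterior_factorization} becomes the corresponding indicator in \eqref{eq:posterior_factorization}.

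Because \(\phi\) is a bijection on \(\mathbb F_2^{2n}\to\mathbb F_4^n\), summing over all configurations gives \(Z_4=Z_2\). The normalized posteriors therefore coincide.

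\emph{Message identities, by induction on \(\ell\).} Prove \eqref{eq:check_message_x_equiv}--\eqref{eq:var_message_z_equiv} jointly, with the base case supplied by compatible initialization. To fix the schedule, I treat the prior \(Q_j\) and \(Q_j^\phi\) as absorbed into the variable-node update, which is what \eqref{eq:gf2_belief}--\eqref{eq:gf4_belief} implicitly do.

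For the variable-node step, take an \(X\)-type check \(i\in\partial^X_j\). The four-state message is
\[
m^{X,(\ell)}_{j\to i}(\phi(x,z))\propto Q_j(x,z)\prod_{i'\in\partial^X_j\setminus\{i\}}\widehat m^{X}_{i'\to j}(\phi(x,z))\prod_{k\in\partial^Z_j}\widehat m^{Z}_{k\to j}(\phi(x,z)).
\]
By the induction hypothesis this equals \(Q_j(x,z)\prod\widehat\nu_{i'\to j}(z)\prod\widehat\mu_{k\to j}(x)\), up to a constant.

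On the binary side, the message from \(z_j\) to \(i\) is \(\nu_{j\to i}(z)\propto\widehat\nu_{Q_j\to z_j}(z)\prod_{i'\neq i}\widehat\nu_{i'\to j}(z)\). The factor-to-variable message from \(Q_j\) is \(\sum_x Q_j(x,z)\,\mu_{x_j\to Q_j}(x)\), and the message from \(x_j\) to \(Q_j\) is \(\prod_{k}\widehat\mu_{k\to j}(x)\).

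Combining these, \(\nu_{j\to i}(z)\) is exactly \(\sum_x\) of the four-state expression above, which is \eqref{eq:var_message_x_equiv}. The \(Z\)-type case \eqref{eq:var_message_z_equiv} is symmetric.

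For the check-node step, the four-state \(X\)-type check factor depends on its neighbors only through \(z(\alpha_k)\). Write each \(\alpha_k=\phi(x_k,z_k)\) and apply the distributive law: the sums over the \(x_k\) with \(k\neq j\) pass inside the indicator. This turns each incoming message into its marginal \(\sum_{x_k}m^X_{k\to i}(\phi(x_k,z_k))\propto\nu_{k\to i}(z_k)\).

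The resulting expression is the binary parity-check update for \(\widehat\nu_{i\to j}(z_j)\), and it does not depend on \(x_j\). This gives \eqref{eq:check_message_x_equiv}, and \eqref{eq:check_message_z_equiv} follows symmetrically.

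\emph{Main obstacle.} The step I expect to need the most care is the bookkeeping of the \(Q_j\) factor node in the binary graph. Its messages \(Q_j\to z_j\) and \(Q_j\to x_j\) must be computed from the current messages \(x_j\to Q_j\) and \(z_j\to Q_j\), which are just products of incoming check messages. Getting the iteration indices to line up means fixing a schedule in which the binary graph makes a pass through \(Q_j\) within each half-iteration; this is the flooding-equivalent schedule, and I would state it explicitly.

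Normalization constants are harmless throughout: each identity holds only up to edge-local positive factors, and multiplying incoming messages by constants only rescales outgoing ones.

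\emph{Beliefs and hard decisions.} Substitute \eqref{eq:check_message_x_equiv}--\eqref{eq:check_message_z_equiv} into \eqref{eq:gf4_belief}. This gives \(b_{4,j}(\phi(\xi,\zeta))\propto b_{2,j}(\xi,\zeta)\) as functions on four points. Since both beliefs are normalized to sum to one over the same four points, the proportionality becomes the equality \eqref{eq:belief_equivalence}.

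Finally, the tie-breaking order on \(\mathbb F_4\) is the one induced through \(\phi\), and \(\phi\) is a bijection. Hence the argmax of \(b_{4,j}\) is \(\phi\) applied to the argmax of \(b_{2,j}\), which gives \(\widehat{\alpha}_j^{(\ell)}=\phi\bigl(\widetilde{x}_j^{(\ell)},\widetilde{z}_j^{(\ell)}\bigr)\).
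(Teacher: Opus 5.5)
Your proposal is correct and follows the paper's proof essentially step for step. Like the paper, you establish the posterior identity by term-by-term substitution and bijectivity of \(\phi\). You then run an induction in which the four-state variable-to-check message is marginalized over the unused binary component, and the \(x_b\)-sums are pushed past the parity indicator at each check. Finally, you substitute into the beliefs, normalize, and use the induced tie-breaking order.

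Your explicit requirement that the binary decoder route \(x_j\to Q_j\to z_j\) within the same variable-node half-iteration sharpens a schedule the paper uses only implicitly. A literal flooding schedule that treats \(Q_j\) as an ordinary factor would make the cross-component information lag by one iteration.
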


The theorem has three parts, each corresponding to a different level of the
decoder.  First, the two factorizations assign the same posterior weight to
corresponding Pauli configurations.  Second, the local sum-product messages can
be matched edge by edge.  Third, the final beliefs and the resulting hard
decisions agree under the same relabeling.  Thus the theorem is stronger than
an equality of final estimates for one initialization; it identifies the whole
message-passing recursion.

\begin{proof}
We take the two posterior factorizations
\eqref{eq:posterior_factorization} and
\eqref{eq:gf4_posterior_factorization} as the starting point and compare the
sum-product algorithms applied to them.

Let \(\boldsymbol{\alpha}=\phi(\boldsymbol{x},\boldsymbol{z})\).  Then
\(x(\alpha_j)=x_j\), \(z(\alpha_j)=z_j\), and
\(Q_j^\phi(\alpha_j)=Q_j(x_j,z_j)\) for every \(j\in\{1,\ldots,n\}\).  Therefore the
unnormalized weights in \eqref{eq:posterior_factorization} and
\eqref{eq:gf4_posterior_factorization} agree term by term.  Since \(\phi\) is
a bijection, the normalizing constants are equal, and
\eqref{eq:posterior_relabeling} follows.

It remains to compare the BP updates.  The \(X\)-type check factors in the
\(\mathbb F_4\)-labeled representation depend on \(\alpha_j\) only through
\(z(\alpha_j)\), and the \(Z\)-type check factors depend on \(\alpha_j\) only
through \(x(\alpha_j)\).  Thus
the corresponding check-to-variable messages can be represented as functions of
only the relevant binary component.

Assume that, at some iteration, \eqref{eq:check_message_x_equiv} and
\eqref{eq:check_message_z_equiv} hold.  The \(\mathbb F_4\)-labeled
variable-to-\(X\)-type-check message is
\[
  m_{j\to i}^{X,(\ell)}(\phi(x,z))
  \propto
  Q_j(x,z)
  \prod_{k\in\partial^X_j\setminus\{i\}}
  \widehat{\nu}_{k\to j}^{(\ell)}(z)
  \prod_{k\in\partial^Z_j}
  \widehat{\mu}_{k\to j}^{(\ell)}(x).
\]
In the coupled binary graph, the message sent from \(z_j\) to the
\(X\)-type check \(i\) through the local factor \(Q_j\) is
obtained by marginalizing the same
expression over \(x\).  This proves \eqref{eq:var_message_x_equiv}.  The proof
of \eqref{eq:var_message_z_equiv} is the same with \(x\) and \(z\) interchanged.

For the check update, the \(\mathbb F_4\)-labeled message from an \(X\)-type check
\(i\) to the variable \(\alpha_j\), with \(j\in\partial_i^X\), is
\[
  \widehat{m}_{i\to j}^{X,(\ell+1)}(\phi(x,z))
  \propto
  \sum_{\substack{(\alpha_b)_{b\in\partial_i^X\setminus\{j\}}\\
  z+\sum_{b\in\partial_i^X\setminus\{j\}}
  z(\alpha_b)=s_i^Z}}
  \prod_{b\in\partial_i^X\setminus\{j\}}
  m_{b\to i}^{X,(\ell)}(\alpha_b).
\]
Write \(\alpha_b=\phi(x_b,z_b)\).  Since the constraint depends only on the
\(z_b\)'s, we may first sum over the \(x_b\)'s and use
\[
  \sum_{x_b\in\mathbb F_2}
  m_{b\to i}^{X,(\ell)}(\phi(x_b,z_b))
  \propto
  \nu_{b\to i}^{(\ell)}(z_b).
\]
The result is exactly the binary sum-product check update for the \(X\)-type row,
which proves \eqref{eq:check_message_x_equiv} at the next iteration.  The
\(Z\)-type check case proves \eqref{eq:check_message_z_equiv} in the same way.
Induction from compatible initialization proves the message identities for all
iterations.

Finally, substituting \eqref{eq:check_message_x_equiv} and
\eqref{eq:check_message_z_equiv} into the four-state belief gives
\[
  b_{4,j}^{(\ell)}(\phi(x,z))
  \propto
  Q_j(x,z)
  \prod_{i\in\partial^X_j}
  \widehat{\nu}_{i\to j}^{(\ell)}(z)
  \prod_{i\in\partial^Z_j}
  \widehat{\mu}_{i\to j}^{(\ell)}(x)
  \propto
  b_{2,j}^{(\ell)}(x,z).
\]
After normalizing both sides, this is \eqref{eq:belief_equivalence}.  Therefore,
with the common tie-breaking rule above, maximum-belief hard decisions satisfy
\(\widehat{\alpha}_j^{(\ell)}
=\phi(\widetilde{x}_j^{(\ell)},\widetilde{z}_j^{(\ell)})\).
\end{proof}

The proof shows why no special property of a particular code instance is
needed.  The check constraints in the four-state graph depend on only one
binary coordinate of each label, while the local prior contains both
coordinates.  Sum-product can therefore sum out the irrelevant coordinate in a
four-state variable-to-check message and recover exactly the coupled binary
message.  This observation is what separates joint BP from separate BP: the
former keeps the joint prior in the factor graph,
whereas the latter removes it before message passing starts.

\section{Log-Likelihood-Ratio Domain Form of the Coupled Binary Decoder}
\label{sec:llr_gf2}

The joint BP decoder can be written directly in the
log-likelihood-ratio (LLR) domain.  For a binary message \(r(u)\),
\(u\in\mathbb F_2\), write
\[
  L(r)=\log\frac{r(0)}{r(1)}.
\]
Let \(L_{j\to i}^{X,(\ell)}\) and \(L_{i\to j}^{X,(\ell)}\) denote,
respectively, the LLRs of the messages
\(\nu_{j\to i}^{(\ell)}\) and \(\widehat{\nu}_{i\to j}^{(\ell)}\) on an
\(X\)-check edge.  These are messages about the \(z_j\) variable.  Similarly,
let \(L_{j\to i}^{Z,(\ell)}\) and \(L_{i\to j}^{Z,(\ell)}\) denote the LLRs of
\(\mu_{j\to i}^{(\ell)}\) and \(\widehat{\mu}_{i\to j}^{(\ell)}\) on a
\(Z\)-check edge, which are messages about the \(x_j\) variable.

At a variable/local-prior factor \(Q_j\), define the incoming check sums
\[
  A_{j\setminus i}^{X,(\ell)}
  =
  \sum_{k\in\partial^X_j\setminus\{i\}} L_{k\to j}^{X,(\ell)},
  \quad
  A_j^{X,(\ell)}
  =
  \sum_{k\in\partial^X_j} L_{k\to j}^{X,(\ell)},
\]
and define \(A_{j\setminus i}^{Z,(\ell)}\) and \(A_j^{Z,(\ell)}\) analogously.
The local \(Q_j\) update is most simply computed by temporarily converting the
incoming LLR sums back to two-entry weights.  Set
\[
  \rho_j^{X,(\ell)}(z)\propto \exp\!\left((1-z)A_j^{X,(\ell)}\right),
  \qquad
  \rho_j^{Z,(\ell)}(x)\propto \exp\!\left((1-x)A_j^{Z,(\ell)}\right).
\]
The proportionality constants in \(\rho_j^{X,(\ell)}\) and
\(\rho_j^{Z,(\ell)}\) may be chosen arbitrarily, for example to avoid overflow,
because they cancel when the result is converted back to an LLR.  The local
marginalizations through the \(2\times2\) table \(Q_j\) are
\[
  \eta_j^{X,(\ell)}(z)
  =
  \sum_{x\in\mathbb F_2} Q_j(x,z)\rho_j^{Z,(\ell)}(x),
  \qquad
  \eta_j^{Z,(\ell)}(x)
  =
  \sum_{z\in\mathbb F_2} Q_j(x,z)\rho_j^{X,(\ell)}(z).
\]
Then the variable-to-check updates are
\begin{align}
  L_{j\to i}^{X,(\ell)}
  &=
  A_{j\setminus i}^{X,(\ell)}
  +
  \log
  \frac{\eta_j^{X,(\ell)}(0)}
       {\eta_j^{X,(\ell)}(1)},
  \qquad i\in\partial^X_j,
  \label{eq:llr_var_x}\\
  L_{j\to i}^{Z,(\ell)}
  &=
  A_{j\setminus i}^{Z,(\ell)}
  +
  \log
  \frac{\eta_j^{Z,(\ell)}(0)}
       {\eta_j^{Z,(\ell)}(1)},
  \qquad i\in\partial^Z_j.
  \label{eq:llr_var_z}
\end{align}
The check updates are the usual syndrome-modified binary parity-check updates:
\begin{align}
  L_{i\to j}^{X,(\ell+1)}
  &=
  (-1)^{s_i^Z}\,
  2\operatorname{atanh}\!\left(
  \prod_{b\in\partial_i^X\setminus\{j\}}
  \tanh\!\left(\frac{L_{b\to i}^{X,(\ell)}}{2}\right)\right),
  \qquad j\in\partial_i^X,
  \label{eq:llr_check_x}\\
  L_{i\to j}^{Z,(\ell+1)}
  &=
  (-1)^{s_i^X}\,
  2\operatorname{atanh}\!\left(
  \prod_{b\in\partial_i^Z\setminus\{j\}}
  \tanh\!\left(\frac{L_{b\to i}^{Z,(\ell)}}{2}\right)\right),
  \qquad j\in\partial_i^Z.
  \label{eq:llr_check_z}
\end{align}
Equivalently, the check update may be written with the standard box-plus
operator, and it may be approximated by the min-sum rule in the usual way.
Such approximations are useful binary-message implementations, but the theorem
above concerns the exact sum-product update.  A min-sum, damped, or clipped
implementation is exactly equivalent to a four-state implementation only if the
same approximation is applied after the same relabeling and marginalization.
The posterior joint belief at qubit \(j\) is represented by the unnormalized
log-belief
\[
  B_j^{(\ell)}(x,z)
  =
  \log Q_j(x,z)
  +(1-z)A_j^{X,(\ell)}
  +(1-x)A_j^{Z,(\ell)}
  +\mathrm{const.}
\]
The joint binary hard decision is therefore
\[
  (\widetilde{x}_j^{(\ell)},\widetilde{z}_j^{(\ell)})
  =
  \operatorname*{arg\,max}_{(x,z)\in\mathbb F_2^2}
  B_j^{(\ell)}(x,z).
\]
Since \(B_j^{(\ell)}(x,z)\) is the logarithm of the unnormalized belief
\(b_{2,j}^{(\ell)}(x,z)\), and
\eqref{eq:belief_equivalence} gives
\(b_{4,j}^{(\ell)}(\phi(x,z))=b_{2,j}^{(\ell)}(x,z)\), the \(\mathbb F_4\)
estimate defined by maximizing \(b_{4,j}^{(\ell)}\) satisfies
\[
  \widehat{\alpha}_j^{(\ell)}
  =
  \phi\!\left(\widetilde{x}_j^{(\ell)},\widetilde{z}_j^{(\ell)}\right).
\]
No probability-domain normalization constants are needed in this LLR
form; all such constants have cancelled in the log-ratios.  This display assumes
the entries of \(Q_j\) used inside the logarithm are positive; zero-probability
entries are treated in the probability domain, or by the corresponding
log-domain limits.

\section{Conclusion}

For CSS syndrome decoding, the posterior can be written as a coupled binary
factor graph.  The check factors are binary parity checks, and the factor
\(Q_j(x_j,z_j)\) is the only factor that contains both Pauli components at
qubit \(j\).  Sum-product on this factorization gives a binary BP decoder that
uses the local \(X/Z\) channel correlation whenever that correlation is present
in \(Q_j\).  If \(Q_j\) is replaced by its two marginals, the resulting decoder
is the separate BP baseline and the local correlation is absent from
the factorization.

The \(\mathbb F_4\)-labeled four-state factor graph is obtained from the same
CSS posterior by the relabeling \(\alpha=x+\omega z\).  Here
\(\mathbb F_4\) is used as a convenient label set for the four local Pauli
states in the decoding factor graph, not as a claim that a CSS code should be
viewed primarily as a quaternary code construction.  Sum-product on that graph
and sum-product on the coupled binary graph produce corresponding messages and
beliefs after relabeling and marginalization.  In this precise sense, four-state
BP and joint BP are two representations of the same probability-domain sum-product
computation for CSS syndrome decoding.
Thus the natural decoder representation depends on the code class: four-state
BP is a natural language for general stabilizer codes, while joint BP is the natural
CSS-specialized language because it keeps the binary Tanner-graph structure and
still retains the local \(X/Z\) correlation.

\bibliographystyle{IEEEtran}
\bibliography{irregular_joint_bp_refs}

\end{document}